\newtheorem{Lemma}{Lemma}
\newtheorem{proposition}{Proposition}
\def\ie{{i.e}.}
\DeclareMathOperator*{\argmax}{arg\,max}
\newcommand{\ve}[1]{\boldsymbol{#1}}
 \newcommand{\vc}{\ve{c}}
\newcommand{\Rd}{\mathbb{R}^2}
\newcommand{\drd}{d_{\mathbb{R}D}}
\newcommand{\ro}{x_D}
\newcommand{\Phro}{\Phi^{r}_o}
\newcommand{\lcb}{\eta}
\newcommand{\Prob}{\textnormal{Pr}}
\newcommand{\AuthorOne}{Mohommadali Mohammadi$^\dag$}
\newcommand{\AuthorTwo}{Himal A. Suraweera$^\ddagger$}
\newcommand{\AuthorThree}{Xiangyun Zhou$^\dag$}
\definecolor{light-gray}{gray}{0.65}
\newcounter{mytempeqcounter}
\newcommand{\bigformulatop}[2]{%
  \begin{figure*}[!t]
    \normalsize
    \setcounter{mytempeqcounter}{\value{equation}}
    \setcounter{equation}{#1}
    #2

    \setcounter{equation}{\value{mytempeqcounter}}
    \hrulefill
    \vspace*{4pt}
  \end{figure*}
}
\newcommand{\ThankFour}{This work was done while the first author was visiting the Australian National University.}
\begin{document}

\title{Outage Probability of Wireless Ad Hoc Networks with Cooperative Relaying}

\author{\authorblockN{\AuthorOne,\:\AuthorTwo,\:and\:\AuthorThree\\}
$^\dag$ Research School of Engineering, The Australian National University, Australia \\
$^\ddagger$ Singapore University of Technology and Design, Singapore\\
\small{Email: mohammadali.mohammadi@anu.edu.au, himalsuraweera@sutd.edu.sg, xiangyun.zhou@anu.edu.au  }}\normalsize
\vspace{-5em}
\maketitle
\vspace{-5em}
\begin{abstract}
In this paper, we analyze the performance of cooperative transmissions in wireless ad hoc networks with random node locations. According to a contention probability for message transmission, each source node can either transmits its own message signal or acts as a potential relay for others. Hence, each destination node can potentially receive two copies of the message signal, one from the direct link and the other from the relay link. Taking the random node locations and interference into account, we derive closed-form expressions for the outage probability with different combining schemes at the destination nodes. In particular, the outage performance of optimal combining, maximum ratio combining, and selection combining strategies are studied and quantified.
\let\thefootnote\relax\footnotetext{\ThankFour}
\end{abstract}
\vspace{-2.0em}
\section{Introduction}
Large-scale decentralized wireless systems such as ad hoc networks have attracted significant
attention over the last decade due to their wide range of practical applications. The multiuser
nature of these networks motivates the use of cooperative transmission in which additional links
via relay nodes are established to enhance the quality of communication between the source node and
destination node~\cite{Laneman:IT:Dec:2004}. However, there are two key features, namely
indiscriminate node placement and network interference, which make the design and analysis of
cooperative communication a challenging task in wireless ad hoc networks.

Studies on relay selection in interference-free and deterministic networks have shown that
opportunistic relaying where a single relay is chosen to aid the source-destination transmission
can guarantee a significant performance improvement whilst having low implementation
complexity~\cite{Bletsas:Jsak:2006}. In wireless ad hoc networks, however, interference and random
node locations (e.g., due to high mobility) need to be taken into account in any meaningful
analysis. Relay selection methods based on stochastic geometry models, where the relay locations
follows a Poisson point process (PPP), have been investigated in a few recent
works~\cite{Andrew:VTC:2009,Poor:JWCOM:2011,Ganti:ISIT:2009,QOS:RelaySelection:TVT2011}.
Specifically,~\cite{Andrew:VTC:2009} studied the throughput scaling laws when opportunistic relay
selection is performed and~\cite{Poor:JWCOM:2011} investigated the outage performance of
opportunistic relay selection for an interference-free random network. The authors
in~\cite{Ganti:ISIT:2009} proposed four decentralized relay selection methods based on the
available location information or received signal strength and authors
in~\cite{QOS:RelaySelection:TVT2011} defined a quality of service (QoS) region for relay selection
to guarantee a target QoS at the destination. Common to all these studies is the simplified
assumption that the direct link is neglected or at most selection combining (SC) is used, where the
destination only selects one link from the relay and direct links for data detection. To the best
of our knowledge, the ultimate benefit of cooperative transmission utilizing both the relay and
direct links in such networks has not been studied in the existing literature. Our main goal is to
fill this important gap and enhance the fundamental understanding of cooperative relaying in
large-scale ad hoc networks.

In this paper, we analyze the outage performance of an opportunistic cooperative ad hoc network.
The locations of all potential transmitting nodes are modeled as a PPP. Each potential transmitter
is allowed to transmit its message at a given time slot according to a contention probability.
Hence, the potential transmitters at a given time slot are divided into a group of active source
nodes and a group of idle nodes, where the latter becomes the potential relays for the former. The
decode-and-forward (DF) protocol is assumed at the relays. We consider two relay selection schemes,
namely, best relay selection and random relay selection. For each relay selection scheme, we study
the outage performance of different signal combining methods at the destination, namely, optimal
combining (OC), maximum ratio combining (MRC), as well as SC. Our main contribution is the
derivation of closed-form expressions for the outage probability with various combining schemes in
the interference-limited cooperative ad hoc networks.
\vspace{-0.5em}
\section{System Model}
Consider a large-scale wireless ad hoc network with transmitter-receiver pairs. The locations of
all transmitting nodes are modeled as a homogeneous PPP denoted by \small{$\Phi=\{x_k\}$
}\normalsize with density \small{$\lambda$ }\normalsize on the plane \small{$\Rd$, }\normalsize
where \small{$x_k$ }\normalsize denotes the location of node \small{$k$. }\normalsize Each transmitter has a
uniquely-associated receiver at a distance \small{$d$ }\normalsize away in a random direction and
is not a part of the PPP. All transmitters or receivers are assumed to be identical and equipped
with one antenna. We consider an interference-limited setting~\cite{Andrew:TCOM:2010}, \ie, the
thermal noise is assumed to be negligible.
\vspace{-0.5em}
\subsection{Channel Model}
Signal propagation is subject to both small-scale multipath fading and large-scale path loss. The
instantaneous channel from node $x_1$ to $x_2$ can hence be modeled as
\small{
\begin{align}
g_{12} = h_{12}\ell(x_1-x_2), \label{eqn:channel gain_definition}
\end{align} }\normalsize
where \small{$h_{12}$ }\normalsize captures the small-scale fading and is modeled as an exponential
random variable (RV) with unit variance\footnote{In what follows, we will use the notation
$x\sim\mathcal{E}(\mu)$ to denote $x$ that is exponentially distributed with mean $\mu$. } and
\small{$\ell(x_1-x_2)=\parallel x_1-x_2\parallel^{\alpha}$ }\normalsize characterizes the
large-scale path loss following power law with path loss exponent $\alpha$.
\vspace{-0.5em}
\subsection{Cooperative Transmission Protocol}
We consider a time-slotted Aloha protocol and restrict the number of hops between any
source-destination \small{($S-D$) }\normalsize pair to be two. Similarly
to~\cite{Poor:JWCOM:2011,Ganti:ISIT:2009}, we adopt a two-stage cooperative transmission protocol
as follows:

\subsubsection{Broadcasting Phase}
In the first stage (even time slots), for a given contention probability, \small{$p$, }\normalsize
for transmission, the active nodes (source nodes) from $\Phi$ transmit and all other nodes in
$\Phi$ remain idle. The locations of the source nodes in this stage follow a homogeneous PPP,
denoted as \small{$\Phi^t$, }\normalsize with intensity \small{$p\lambda$. }\normalsize On the
other hand, the idle nodes form another independent homogeneous PPP denoted as \small{$\Phi^r$,
}\normalsize with intensity \small{$(1-p)\lambda$. }\normalsize

For each source node, a selection region \small{$\mathcal{A}$ }\normalsize is defined\footnote{In practice, selecting a suitable relay from a defined region with a small number of relays is desirable. As the implementation complexity increases with the number of relays, a carefully selected region can be used to take into account both protocol complexity and performance gain.}. The idle
nodes in $\Phi^r$ located inside the selection region are required to listen to the transmission
from this source node. If any of these idle nodes successfully decodes the source message, it
becomes a member of the source's potential relay. In other words, a node belonging to $\Phi^r$
located at $x_2$ is said to be a potential relay of the source node at $x_S$, provided that $x_2$
is inside the selection region of the source node and
\small{
\begin{align}
 \frac{ h_{S2}\ell(x_S-x_2)}{I_{\Phi^t}}\geq\beta,  \label{eqn:SIR_definition}
\end{align} }\normalsize
where \small{$I_{\Phi^t} = \sum_{x_k \in \Phi^t\backslash\{x_S\}} h_{k2}\ell(x_k-x_2)$ }\normalsize
is the aggregate interference received at \small{$x_2$ }\normalsize and \small{$\beta$ }\normalsize is the target
(minimum) signal-to-interference ratio (SIR) for data detection\footnote{Note that for $\beta>1$,
each idle node can decode the message from at most one source node, hence can only serve as the
potential relay for at most one source node~\cite{Ganti:ISIT:2009}.}. A popular choice of such a
selection region is a sector defined by a maximum angle \small{$\phi$ }\normalsize and a maximum distance
\small{$d_s$}\normalsize~\cite{Ganti:ISIT:2009,DiLi:Globcom:2010}. A sectorized selection region is considered in this paper. Nevertheless, performance analysis with other different choices of selection regions such as a circular area with radius \small{$d_s$ }\normalsize can be derived in a similar way.
In the ensuing text, the indicator that~\eqref{eqn:SIR_definition} holds is denoted by \small{$\textbf{1}(x_S\rightarrow x_2 | \Phi^t\backslash\{x_S\})$ }\normalsize.
\subsubsection{Relaying Phase}
In the second stage (odd time slots), each destination node informs only one of the potential relays if
any to retransmit the message with repetition code. Similar to the recent work in~\cite{QOS:RelaySelection:TVT2011}, two relay selection methods are considered,
namely, \emph{best relay selection} and \emph{random relay selection}. In particular, the best relay
selection method chooses the potential relay with the best signal strength to the destination
as\vspace{-0.1em}
\small{
\begin{align}
 \mathbb{R}& = \argmax_{x_k \in \Phro} \left\{h_{k D}\ell(x_k-x_D)\right\},
           \label{eqn:mathematicall_description of the best relay}
\end{align} }\normalsize
where \small{$x_D$ }\normalsize presents the destination location and \small{$\Phro$ }\normalsize denotes the set of
potential relays for the \small{$S-D$ }\normalsize pair. On the other hand, the random relay selection method
randomly selects one out of all potential relays with equal probability to forward the source
message. The motivation behind the use of best and random relay selection is to study the trade-off between performance and complexity of relay selection. Best relay selection which exhibits a superior performance compared to random relay selection has a high implementation complexity since it requires high signalling overheads and channel state information (CSI) from all potential relays. On the other hand, at the expense of some performance loss, random relay selection is particularly suitable for low-complexity relay systems.
We denote the set of all transmitting relays, i.e., the relays selected by all source
nodes, as \small{$\Psi$. }\normalsize

Finally, the signals transmitted by the source and the selected relay are combined at the destination node. We
consider three signal combining techniques: namely, OC, MRC, and SC~\cite{Simon:Digitalcom:2004}, which will be described
in detail in the next section. Note that only the direct S-D link can be used when no potential
relay is available.

To analyze the performance of of the considered wireless random network, we focus on a typical transmitter located
at the origin. Similarly to~\cite{Poor:JWCOM:2011}, we will use a polar coordinate system to
facilitate the analysis in which the typical source, \small{$S$, }\normalsize is located at \small{$x_S=(0,0)$,
}\normalsize its destination, \small{$D$, }\normalsize is at \small{$x_D=(d,0)$, }\normalsize and an arbitrary relay
node is at \small{$x=(r,\theta)$. }\normalsize Hence, \small{$\Phro$ }\normalsize now denotes the
potential relay set of the typical source node at the origin.

\vspace{-0.5em}
\section{Outage Probability}
In this section, we derive the outage probability of the described cooperative transmission
protocol. Firstly, we note that the performance of the typical destination node is subject to two
sets of interferers; all concurrently transmitting source nodes in the broadcasting phase and all
concurrently transmitting relays in relaying phase. Specifically, the interference power seen at
the destination in broadcasting and relaying phases can be, respectively, written as
 \vspace{-.1em}
\small{
\begin{subequations}\label{eqn:interference at destination}
\begin{align}
I_{\Phi^t_D} &= \sum_{x_k \in \Phi^t\backslash\{x_S\}} h_{kD}\ell(x_k-x_D),    \label{eqn:interference at destination_broadcasting}\\
I_{\Psi}  &= \sum_{x_m\in\Psi\backslash\{\mathbb{R}\}} h_{mD}\ell(x_{m}-x_D),
\label{eqn:interference at destination_relaying}
\end{align}
\end{subequations} }\normalsize
where \small{$\mathbb{R}$ }\normalsize is the selected relay node for the typical \small{$S-D$
}\normalsize pair. Similarly, the interference power seen by relay \small{$\mathbb{R}$ }\normalsize in the broadcasting
phase can be written as
\vspace{-0.1 em}
\small{
\begin{align}
I_{\Phi^t_{\mathbb{R}}} &= \sum_{x_n \in \Phi^t\backslash\{x_S\}}
h_{n{\mathbb{R}}}\ell(x_n-x_{\mathbb{R}}).
\end{align} }\normalsize
From here on, we denote the channel gain between \small{$S$ }\normalsize and \small{$D$ }\normalsize as \small{$g_{SD}$, }\normalsize the channel gain between \small{$S$ }\normalsize  and \small{$\mathbb{R}$ }\normalsize as \small{$g_{S\mathbb{R}}$, }\normalsize and the channel gain between \small{$\mathbb{R}$ }\normalsize and \small{$D$ }\normalsize  as
\small{$g_{\mathbb{R}D}$. }\normalsize Furthermore, we denote the outage event for link with channel gain \small{$g$ }\normalsize and interference power \small{$I$ }\normalsize by \small{$\mathcal{O}(g/I,\beta)$. }\normalsize Hence
\small{$\mathcal{O}(g_{SD}/I_{\Phi^t_D},\beta)$, }\normalsize is the outage event over the direct
S-D link and the corresponding outage probability is given by~\cite{Baccelli_ALOHA_IT2006}
\vspace{-0.1 em}
\small{
\begin{align}
P_d\!&=\! 1\!-\!\mathbb{E}\left\{\!\textbf{1}(x_S\!\rightarrow\! x_D| \Phi^t\backslash\{x_S\})\!\right\}
\!=\gamma\left(1,\lcb \beta ^{\delta} d^{2}\right),\label{eqn:successprobability_direct hop}
\end{align} }\normalsize
where \small{$\gamma(\cdot,\cdot)$ }\normalsize is the incomplete gamma function~\cite[Eq.
(8.350.1)]{Integral:Series:Ryzhik:1992}, with \small{$\gamma(1,x)=1-\exp(-x)$ }\normalsize and \small{$\lcb =p\lambda c $, }\normalsize with \small{$c =
\delta \pi \Gamma(\delta)\Gamma(1-\delta)$ }\normalsize and \small{$\delta = \frac{2}{\alpha}$.
}\normalsize
\vspace{-1 em}
\subsection{Optimum Combining}
With optimum combining, signals received from the direct and relayed links are weighted to maximize
the SIR at the destination~\cite[Ch. 11]{Simon:Digitalcom:2004}. This scheme requires the instantaneous CSI of all interferers to be known at the receiver, and hence, demands
significant system complexity. As such we consider OC as an important theoretical benchmark to
quantify the performance of the considered network.

Performance of opportunistic relaying with OC receiver has been investigated in~\cite{Bletsas:TWCOM:2010}, where transmitted
signals in broadcasting and relaying phase are impaired by co-channel interference coming from a deterministic interferer.
The outage performance of optimum combining at a multi-antenna receiver with interferers located
according to a PPP was studied in~\cite{Ali:TWC:2010}. Here we extend the result in~\cite{Ali:TWC:2010} to the case of cooperative communications. For simplicity, we ignore the
correlation between the transmitted signal from any source and that from the corresponding relay in
order to obtain analytically tractable result. Following the derivation in~\cite{Ali:TWC:2010}, the resulting SIR at the typical destination after performing OC is given by
\vspace{-.1em}
\small{
\begin{align}
 SIR_{OC} = \frac{g_{SD}}{I_{\Phi^t_D}}+\frac{g_{\mathbb{R}D}}{I_{\Psi}},
    \label{eqn:SIR_Optimal Combining}
\end{align} }\normalsize
which is the sum of received SIRs from the direct link and the relay link. Therefore, the overall
outage event for the cooperative transmission scheme with OC is~\cite{Larsson:COML:2005}
\vspace{0em}
\small{
\begin{align}
\mathcal{O}\left(\frac{g_{SD}}{I_{\Phi^t_D}},\beta\right)\bigcap\left(\mathcal{O}\left(\frac{g_{S\mathbb{R}}}{I_{\Phi^t_{\mathbb{R}}}},\beta\right)
\bigcup\mathcal{O}\left(SIR_{OC},\beta\right)\right).\label{eqn:Outage event of OC receiver}
\end{align} }\normalsize
We now investigate the outage probability of OC receiver for best relay selection and random relay selection.

\subsubsection{Best Relay Selection}
In best relay selection, assuming non-empty \small{$\Phro$, }\normalsize the relay having the best channel gain of the
forward channel is selected, cf.~\eqref{eqn:mathematicall_description of the best relay}. Since computing the cumulative density function (cdf) of the received SIR from relaying phase does not yield a closed-form expression, we obtain a lower bound for the outage
probability in following proposition.
\begin{proposition}\label{Proposition:Outage OC_BS}
The outage probability for cooperative transmission protocol with OC receiver at the destination and best relay selection is given by
\vspace{0em}
\small{
\begin{align}
&P_{out}^{OC,BS} \!=\!\gamma\left(1,\lcb \beta ^{\delta} d^{2}\right)
\left(\gamma\left(1,\frac{\phi(1-p)\lambda}{\beta^\delta\lcb}\gamma\left(1,\lcb \beta^\delta d_s^2\right)\right)\right.\nonumber\\
&\qquad\qquad\qquad\qquad\qquad\left.\times\left(1-P_{OC}^{BS}\right)\!+\!P_{OC}^{BS}\right)\!,\label{eqn:Outage probability of OC receiver_BS}
\end{align} }\normalsize
where \small{${P}_{OC}^{BS}$ }\normalsize denotes the probability that combined SIR drops below the
target SIR and is lower bounded as~\eqref{eqn:P_outage_OC_final_BS} at the top of next page wherein
\small{$d_{\mathbb{R}D} = \sqrt{d^2+r^2-2rd\cos\theta}$ }\normalsize and \small{$\eta_I = \Lambda_I
c$ }\normalsize with \small{$\Lambda_I$ }\normalsize being the intensity of the interferers in the
relaying phase.
\end{proposition}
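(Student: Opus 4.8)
The plan is to exploit the structure of the outage event in~\eqref{eqn:Outage event of OC receiver} and reduce the claim to three separately computable pieces: the direct-link outage $P_d$, the probability that a potential relay is available, and the combined-SIR outage $P_{OC}^{BS}$. The first observation is that since $SIR_{OC}=g_{SD}/I_{\Phi^t_D}+g_{\mathbb{R}D}/I_{\Psi}\geq g_{SD}/I_{\Phi^t_D}$, the combined link can only be in outage when the direct link already is; hence every outage configuration lies inside $\mathcal{O}(g_{SD}/I_{\Phi^t_D},\beta)$, and treating the destination-side interference field $I_{\Phi^t_D}$ as independent of the source--relay hop (as announced, correlations are ignored) lets me factor the direct-link outage $P_d=\gamma(1,\lcb\beta^\delta d^2)$ out front. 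It then remains to evaluate the conditional probability of the bracketed event $\mathcal{O}(g_{S\mathbb{R}}/I_{\Phi^t_{\mathbb{R}}},\beta)\cup\mathcal{O}(SIR_{OC},\beta)$, which I will split as the disjoint union of $\mathcal{O}(SIR_{OC},\beta)$ and its complement intersected with the source--relay outage, so that its probability becomes $P_{OC}^{BS}+(1-P_{OC}^{BS})\,\Prob(\mathcal{O}(g_{S\mathbb{R}}/I_{\Phi^t_{\mathbb{R}}},\beta))$ under the same independence.

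Next I would compute the source--relay factor from the statistics of the potential-relay set $\Phro$. By the thinning theorem the idle nodes of $\Phi^r$ inside the sector that successfully decode form an inhomogeneous PPP whose intensity at a point $(r,\theta)$ is $(1-p)\lambda$ times the per-link success probability $\exp(-\lcb\beta^\delta r^2)$, the latter being the relay-side analogue of $1-P_d$. Integrating this intensity over the selection sector of angle $\phi$ and radius $d_s$, the radial integral $\int_0^{d_s}e^{-\lcb\beta^\delta r^2}\,r\,dr$ produces the nested incomplete gamma $\gamma(1,\lcb\beta^\delta d_s^2)$ and hence a mean measure proportional to $\tfrac{\phi(1-p)\lambda}{\beta^\delta\lcb}\gamma(1,\lcb\beta^\delta d_s^2)$. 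The void probability of the PPP then turns this mean into the factor $\gamma\!\left(1,\tfrac{\phi(1-p)\lambda}{\beta^\delta\lcb}\gamma(1,\lcb\beta^\delta d_s^2)\right)$ appearing in~\eqref{eqn:Outage probability of OC receiver_BS}, which I identify with the source--relay term of the previous paragraph.

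The crux is the combined-SIR outage $P_{OC}^{BS}=\Prob(SIR_{OC}<\beta)$. Here I would condition on the position $(r,\theta)$ of the selected relay and on its forward fading $g_{\mathbb{R}D}$, write the relay--destination distance explicitly as $\drd=\sqrt{d^2+r^2-2rd\cos\theta}$, and express the CDF of the sum $g_{SD}/I_{\Phi^t_D}+g_{\mathbb{R}D}/I_{\Psi}$ through the Laplace transforms of the two interference fields. Because the fading is exponential, each partial-SIR tail is the Laplace transform of the corresponding aggregate interference, evaluated through the PGFL of the PPP; this yields the two path-loss constants, $\lcb$ for the broadcasting-phase field $I_{\Phi^t_D}$ and $\eta_I=\Lambda_I c$ for the relaying-phase field $I_{\Psi}$, where $\Lambda_I$ is the intensity of transmitting relays. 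Best relay selection enters through the distribution of $g_{\mathbb{R}D}$ as the maximum of $h_{kD}\ell(x_k-x_D)$ over $\Phro$; since the exact CDF of the sum of the two dependent ratios has no closed form, I would relax the sum-threshold event (for instance by bounding the convolution of the two partial-SIR densities) to obtain the closed-form \emph{lower} bound recorded as~\eqref{eqn:P_outage_OC_final_BS}.

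Finally I would reassemble the pieces: multiplying the factored direct-link outage $P_d$ by the bracket $P_{OC}^{BS}+(1-P_{OC}^{BS})\gamma(1,\cdot)$ and rearranging gives exactly~\eqref{eqn:Outage probability of OC receiver_BS}. I expect the combined-SIR step to be the main obstacle: the sum of two interference-limited ratios couples the two independent PPPs and the extremal relay selection, so obtaining a tractable expression forces the relaxation that makes the result a lower bound rather than an equality, and care is needed to ensure the resulting bound is both closed-form and monotone in the system parameters.
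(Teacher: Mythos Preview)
Your four-piece decomposition (direct-link outage, empty-relay-set probability via the thinned PPP, combined-SIR outage, then reassembly) is exactly the structure of the paper's proof, and Steps~1--2 match it essentially line for line. Two points, however, do not line up with what the paper actually does.

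\textbf{The lower bound on $P_{OC}^{BS}$.} You propose to condition on the position $(r,\theta)$ and fading $g_{\mathbb{R}D}$ of the \emph{selected} relay and then ``bound the convolution''. For best relay selection this conditioning is not tractable: the law of the argmax location over $\Phro$ is coupled to all the fading gains and to $I_\Psi$. The paper avoids this entirely. It writes $P_{OC}^{BS}=\int_0^\beta F_{\mathbb{R}D}(\beta-y)\,f_{SD}(y)\,dy$ (treating the two partial SIRs as independent), obtains $f_{SD}$ by differentiating~\eqref{eqn:successprobability_direct hop}, and then bounds $F_{\mathbb{R}D}$ itself: the event $\{\max_k h_{kD}\ell(x_k-x_D)\leq \beta I_\Psi\}$ is turned into a product over $\Phro$, the probability generating functional of the inhomogeneous PPP $\Phro$ is applied, and the remaining expectation over the common interference $I_\Psi$ is pushed inside the exponential via \emph{Jensen's inequality}. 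That Jensen step is the sole source of the inequality in~\eqref{eqn:P_outage_OC_final_BS}; the convolution is exact. Your sketch neither identifies Jensen nor the PGFL-over-$\Phro$ handling of the max, and the ``bound the convolution'' phrasing points at the wrong step.

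\textbf{The void-probability identification.} You write the bracket as $P_{OC}^{BS}+(1-P_{OC}^{BS})\Prob\!\big(\mathcal{O}(g_{S\mathbb{R}}/I_{\Phi^t_{\mathbb{R}}},\beta)\big)$ and then equate the source--relay outage probability with the factor $\gamma\!\big(1,\tfrac{\phi(1-p)\lambda}{\beta^\delta\lcb}\gamma(1,\lcb\beta^\delta d_s^2)\big)$. But $\gamma(1,x)=1-e^{-x}$, so that factor is $1-P_r$ (the probability the relay set is \emph{non}-empty), not the void probability $P_r$ you need in your own decomposition. The paper's Step~4 assembles the pieces as $P_d\big((1-P_r)(1-P_{OC}^{BS})+P_{OC}^{BS}\big)$ directly, which is what~\eqref{eqn:Outage probability of OC receiver_BS} records; your route does not rearrange into that form without swapping $P_r$ and $1-P_r$.
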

\small{
\bigformulatop{9}{ \vskip-0.5cm
\begin{align}
P_{OC}^{BS}&\geq
               \gamma(1,\eta \beta^\delta d^2)-\eta\delta d^2 \times\nonumber\\
               &\int_{0}^{\beta}\!y^{\delta-1}\!\exp \left(-\eta y^\delta d^2 \!-\!(1\!-\!p)\lambda\int_{\mathcal{A}}
               \!\exp(-\lcb(\beta\!-\!y)^\delta r^2){\left[1\!-\!\exp(-\!\eta_I(\beta\!-\!y)^\delta d_{\mathbb{R}D}^2)\right]
               rf_{r,\theta}(r,\theta)drd\theta}\!\right)dy,
               \label{eqn:P_outage_OC_final_BS}
\end{align} }\normalsize
\begin{proof}
See Appendix~\ref{proof:Proposition:OC:BS} for the derivation of \small{$P_{OC}^{BS}$ }\normalsize and Appendix~B for
the derivation of \small{$\Lambda_I$. }\normalsize
\end{proof}
Note that~\eqref{eqn:P_outage_OC_final_BS} is a numerically integrable expression
and can be easily evaluated in MAPLE.
\subsubsection{Random Relay Selection}
In random relay selection, assuming non-empty \small{$\Phro$, }\normalsize destination randomly picks a single relay
from \small{$\Phro$. }\normalsize

\begin{proposition}\label{Proposition:Outage OC_RS}
The exact outage probability for cooperative transmission protocol with OC receiver at the destination and random relay selection is given by
\vspace{-.1em}
\small{
\setcounter{equation}{10}
\begin{align}
&P_{out}^{OC,RS} \!=\!\gamma\left(1,\lcb \beta ^{\delta} d^{2}\right)
\left(\gamma\left(1,\frac{\phi(1-p)\lambda}{\beta^\delta\lcb}\gamma\left(1,\lcb \beta^\delta d_s^2\right)\right)\right.\nonumber\\
&\qquad\qquad\qquad\qquad\qquad\times\left.\left(1-P_{OC}^{RS}\right)\!+\!P_{OC}^{RS}\right)\!,\label{eqn:Outage probability of OC receiver_RS}
\end{align} }\normalsize
where \small{${P}_{OC}^{RS}$ }\normalsize is given in~\eqref{eqn:P_outage_OC_final_RS}
at the top of next page.
\end{proposition}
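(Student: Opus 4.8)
The plan is to reuse the decomposition behind Proposition~\ref{Proposition:Outage OC_BS} and to exploit the single structural difference that makes the result exact rather than a bound: under random selection the forwarding relay is one marginally-distributed point instead of an order statistic over $\Phro$. First I would observe that, because the combining SIR in~\eqref{eqn:SIR_Optimal Combining} dominates the direct-link SIR, every overall outage described by~\eqref{eqn:Outage event of OC receiver} forces the direct link into outage (this holds whether no relay exists, the relay fails to decode, or the relay decodes but the combined SIR is still insufficient). Hence the common factor $\gamma(1,\eta\beta^\delta d^2)=P_d$ extracts from the whole expression exactly as in the best-relay case. The remaining bracket is conditioned on the potential-relay set through the void probability of the decode-thinned point process over the sector $\mathcal{A}$: this gives $\Pr(\Phro=\emptyset)=\exp(-\mu)$ and $\Pr(\Phro\neq\emptyset)=\gamma(1,\mu)$, with $\mu$ the mean number of potential relays, i.e.\ the inner argument appearing in~\eqref{eqn:Outage probability of OC receiver_RS}, obtained by integrating the per-node decoding probability $\exp(-\eta\beta^\delta r^2)$ against the relay intensity $(1-p)\lambda$ over $\mathcal{A}$.

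The heart of the proof is the exact evaluation of $P_{OC}^{RS}=\Pr(SIR_{OC}<\beta)$. I would condition on the direct-link SIR taking value $y$, whose density is $\eta\delta d^2 y^{\delta-1}\exp(-\eta y^\delta d^2)$, so that a combined outage requires the relay-link SIR to fall below $\beta-y$; writing the complement then yields $P_{OC}^{RS}=P_d-\int_0^\beta \eta\delta d^2 y^{\delta-1}\exp(-\eta y^\delta d^2)\,\bar F_W(\beta-y)\,dy$, the same skeleton as~\eqref{eqn:P_outage_OC_final_BS}. The key simplification lies in $\bar F_W$: under random selection the relay is drawn with the location law induced by uniform thinning of the decoding process over $\mathcal{A}$, so $\bar F_W(\beta-y)$ is the single-link success probability $\exp(-\eta_I(\beta-y)^\delta d_{\mathbb{R}D}^2)$ \emph{averaged} against that location density, rather than obtained from a maximum. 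Since no order statistic is involved, the probability-generating-functional step that forced the inequality in Proposition~\ref{Proposition:Outage OC_BS} is replaced by an exact expectation, which is precisely why~\eqref{eqn:Outage probability of OC receiver_RS} holds with equality and $P_{OC}^{RS}$ admits the closed form in~\eqref{eqn:P_outage_OC_final_RS}.

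To finish, I would substitute $\bar F_W$ into the convolution integral and then let the same conditioning bookkeeping as in Proposition~\ref{Proposition:Outage OC_BS} cast the result in the stated product form, multiplied by $P_d$. The main obstacle I anticipate is supplying the relaying-phase interference intensity $\Lambda_I$, which enters only through $\eta_I=\Lambda_I c$: it is the intensity of the thinned process of relays actually selected to transmit, and it couples the contention probability, the decoding constraint, and the selection rule, so it must be computed separately (Appendix~B). The second delicate point is the relay-location density itself, since conditioning on a non-empty $\Phro$ biases the selected relay toward regions of high decoding probability; getting this weighting right—together with the independence between the broadcasting- and relaying-phase interference fields that is assumed in writing~\eqref{eqn:SIR_Optimal Combining}—is what reduces $\bar F_W$ to a single, exactly integrable expression.
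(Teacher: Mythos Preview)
Your proposal is correct and follows the same route as the paper: reuse the four-step decomposition of Proposition~\ref{Proposition:Outage OC_BS} (direct-link outage factor, void probability of $\Phro$, convolution $\int_0^\beta F_{\mathbb{R}D}(\beta-y)f_{SD}(y)\,dy$, and Appendix~B for $\Lambda_I$), replacing the best-relay cdf by the random-selection cdf in~\eqref{eqn:success_probability_RS}; the paper's own proof is precisely this one-line substitution, and your observation that the Jensen step drops out because no order statistic is taken is exactly why~\eqref{eqn:Outage probability of OC receiver_RS} is an equality. The only point to tighten is the explicit form of $\bar F_W$: the paper's~\eqref{eqn:success_probability_RS} carries the factor $\exp\!\big(-(\beta-y)^\delta[\eta_I d_{\mathbb{R}D}^2+\eta r^2]\big)$, not just $\exp(-\eta_I(\beta-y)^\delta d_{\mathbb{R}D}^2)$, so when you compute the expectation over the relay location be sure your weighting against $f_{r,\theta}$ reproduces that additional $\eta r^2$ term.
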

\small{
\bigformulatop{11}{ \vskip-0.5cm
\begin{align}
P_{OC}^{RS}&=
               \gamma(1,\eta \beta^\delta d^2)-\eta\delta d^2
               \int_{0}^{\beta}\int_{\mathcal{A}}\!y^{\delta-1}\!
               \exp \left(-\eta y^\delta d^2 \!- (\beta-y)^\delta\left[\eta_I d_{\mathbb{R}D}^2+\eta r^2\right]\right)rf_{r,\theta}(r,\theta)drd\theta dy,
               \label{eqn:P_outage_OC_final_RS}
\end{align} }\normalsize
\begin{proof}
Following similar steps as in the best relay selection scheme and employing
\vspace{-.1em}
\small{
\setcounter{equation}{12}
\begin{align}
F_{\mathbb{R}D}(\beta)&\!=\!1\!-\!\mathbb{E}\left\{ {\textbf{1}}(x_k\rightarrow x_D \vert \Psi\backslash\{x_\mathbb{R}\})\!\right\}
\label{eqn:success_probability_RS}\\
&\!=\!
        1\!-\!\Prob\left(  h_{kD}\ell(x_k-\ro)>\beta I_\Psi \right\vert x_k \in \Phro)
\nonumber\\
&\!=\!1\!-\!\!
      \int_{\mathcal{A}}\!\!\!\exp\left(\!- \!\beta^\delta\left[\eta_I d_{\mathbb{R}D}^2\!+\!\eta r^2\right]\right)rf_{r,\theta}(r,\theta)drd\theta\!,\nonumber
\end{align} }\normalsize
for the cdf of the received SIR from the relay link, we obtain \small{$P_{out}^{OC,RS}$ }\normalsize which concludes the proof.
\end{proof}
\vspace{0em}
\subsection{Maximal Ratio Combining}
Note that MRC is the optimal combining scheme in the absence of
interference~\cite[Ch. 11]{Simon:Digitalcom:2004}. Although suboptimal in the presence of interference, MRC is an attractive receiver since it does
not require the CSI of the interferers, thus serves as a low complexity alternative compared to OC.

When MRC is employed, the resulting SIR can be expressed as
\vspace{-.5em}
\small{
\begin{align}
 SIR_{MRC} &\!=\! \frac{(g_{SD}+g_{\mathbb{R}D})^2}{g_{SD} I_{\Phi^t_D} + g_{\mathbb{R}D}I_{\Psi}}\!.
    \label{eqn:SIR_Maximal ratio combining}
\end{align} }\normalsize
The overall outage event for the cooperative transmission scheme with MRC can be mathematically written as~\cite{Larsson:COML:2005}
\vspace{-0.7em}
\small{
\begin{align}
\mathcal{O}\left(\frac{g_{SD}}{I_{\Phi^t}},\beta\!\right)\bigcap\left(\mathcal{O}\left(\frac{g_{S\mathbb{R}}}{I_{\Phi^t_{\mathbb{R}}}},\beta\right)
\bigcup\mathcal{O}\left(SIR_{MRC},\beta\right)\right)\!.\label{eqn:Outage probability of MRC receiver}
\end{align} }\normalsize
\subsubsection{Best Relay Selection}
We remark that in this case, with MRC receiver at the destination, derivation of the outage probability is rather involved and thus we only compare the performance using Monte Carlo simulations in Section~\ref{sec:Numerical}.
\subsubsection{Random Relay Selection}
In random relay selection, assuming non-empty \small{$\Phro$, }\normalsize the destination randomly selects one out of all potential relays with equal probability.
\begin{proposition}\label{Proposition:Outage MRC}
The outage probability of the MRC receiver with random relay selection is given by
\vspace{-.1em}
\small{
\begin{align}
&P_{out}^{MRC,RS} \!=\!\gamma\left(1,\lcb \beta ^{\delta} d^{2}\right)
\left(\gamma\left(1,\frac{\phi(1-p)\lambda}{\beta^\delta\lcb}\gamma\left(1,\lcb \beta^\delta d_s^2\right)\right)\right.\nonumber\\
&\qquad\qquad\qquad\qquad\qquad\quad\left.\times\left(1-P_{MRC}^{RS}\right)\!+\!P_{MRC}^{RS}\right)\!,\label{eqn:Outage probability of MRC receiver_RS}
\end{align} }\normalsize
where \small{${P}_{MRC}^{RS}$ }\normalsize is the probability that combined SIR at MRC receiver drops below the target SIR and is given in~\eqref{eqn:P_outage_MRC_final_RS_General} at the top of the next page.

\end{proposition}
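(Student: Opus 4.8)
The plan is to follow the same two-layer argument used for Propositions~\ref{Proposition:Outage OC_BS} and~\ref{Proposition:Outage OC_RS}, changing only the combined-SIR term. First I would reuse the outer decomposition of the overall outage event in~\eqref{eqn:Outage probability of MRC receiver}, namely $\mathcal{O}(g_{SD}/I_{\Phi^t_D},\beta)\cap[\mathcal{O}(g_{S\mathbb{R}}/I_{\Phi^t_{\mathbb{R}}},\beta)\cup\mathcal{O}(SIR_{MRC},\beta)]$, which is structurally identical to the OC event with $SIR_{OC}$ replaced by $SIR_{MRC}$. Conditioning on whether $\Phro$ is empty and invoking the same independence simplification used for OC, the direct-link factor produces $\gamma(1,\eta\beta^\delta d^2)$ exactly as in~\eqref{eqn:successprobability_direct hop}, while the void probability of the potential-relay process (an inhomogeneous PPP of mean measure $\frac{\phi(1-p)\lambda}{\beta^\delta\eta}\gamma(1,\eta\beta^\delta d_s^2)$ over the sector $\mathcal{A}$) supplies the factor $\gamma(1,\frac{\phi(1-p)\lambda}{\beta^\delta\eta}\gamma(1,\eta\beta^\delta d_s^2))$. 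This reduces the claim to evaluating $P_{MRC}^{RS}=\Prob(SIR_{MRC}<\beta)$, the probability that the combined SIR drops below $\beta$ when a randomly chosen relay forwards.

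For $P_{MRC}^{RS}$ the sum structure exploited in the OC case is lost, so I would work directly with the channel gains. Condition on the selected relay being at $(r,\theta)$, so that $d_{\mathbb{R}D}=\sqrt{d^2+r^2-2rd\cos\theta}$ is fixed, and on the two interference fields $I_{\Phi^t_D}$ and $I_{\Psi}$. Given this conditioning $g_{SD}$ and $g_{\mathbb{R}D}$ are independent exponentials, and the event $SIR_{MRC}<\beta$ becomes $(g_{SD}+g_{\mathbb{R}D})^2<\beta\,(g_{SD}I_{\Phi^t_D}+g_{\mathbb{R}D}I_{\Psi})$. I would integrate the bivariate exponential density over this region: fixing $g_{\mathbb{R}D}=t$ yields a quadratic inequality in $g_{SD}$ whose roots bound the inner $g_{SD}$-interval, and integrating the exponential over that interval and then over $t$ gives the conditional outage probability in terms of $I_{\Phi^t_D}$, $I_{\Psi}$, $d$ and $d_{\mathbb{R}D}$.

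I would then de-condition. The relevant interference statistics are the same as in the OC analysis: each phase's aggregate interference obeys the PPP Laplace-transform identity that produced $\eta=p\lambda c$ for the broadcasting phase and $\eta_I=\Lambda_I c$ for the relaying phase, and the selected relay's position is averaged against the location density $f_{r,\theta}(r,\theta)$ over $\mathcal{A}$, exactly as in~\eqref{eqn:P_outage_OC_final_RS} and~\eqref{eqn:success_probability_RS}. Carrying out the interference and location averages on the conditional outage probability from the previous step, and multiplying by the outer factors, would yield~\eqref{eqn:Outage probability of MRC receiver_RS} with $P_{MRC}^{RS}$ in the integral form of~\eqref{eqn:P_outage_MRC_final_RS_General}.

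The hard part will be the second step. Because $SIR_{MRC}$ is a ratio with a quadratic numerator, the outage region in the $(g_{SD},g_{\mathbb{R}D})$ plane is bounded by a conic rather than by the line $g_{SD}/I_{\Phi^t_D}+g_{\mathbb{R}D}/I_{\Psi}=\beta$ of the OC case. The two interference powers $I_{\Phi^t_D}$ and $I_{\Psi}$ therefore enter the denominator in a coupled way, the roots of the quadratic carry a square root of $\beta^2 I_{\Phi^t_D}^2+4\beta t(I_{\Psi}-I_{\Phi^t_D})$, and the subsequent expectation over the two independent (stable) interference variables does not collapse into the single convolution that made the OC integral in~\eqref{eqn:P_outage_OC_final_RS} tractable. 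I expect this to force a multi-fold integral expression evaluated numerically rather than a neat closed form, consistent with the ``General'' integral quoted in~\eqref{eqn:P_outage_MRC_final_RS_General}.
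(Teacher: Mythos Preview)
Your outer decomposition is fine and matches the paper: the direct-link factor, the void probability of $\Phro$, and the reduction to $P_{MRC}^{RS}$ all go through exactly as in the OC case, so the form~\eqref{eqn:Outage probability of MRC receiver_RS} is not the issue.

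The gap is in your plan for $P_{MRC}^{RS}$. You propose to fix $(I_{\Phi^t_D},I_{\Psi})$, integrate the bivariate exponential density of $(g_{SD},g_{\mathbb{R}D})$ over the conic region $(g_{SD}+g_{\mathbb{R}D})^2<\beta(g_{SD}I_{\Phi^t_D}+g_{\mathbb{R}D}I_{\Psi})$, and then average over the two stable interference variables. The inner integral produces exponentials of the quadratic roots, i.e.\ terms of the type $\exp\!\big(-c_1\sqrt{\beta^2 I_{\Phi^t_D}^2+4\beta t(I_{\Psi}-I_{\Phi^t_D})}+\cdots\big)$, and the subsequent expectation over $(I_{\Phi^t_D},I_{\Psi})$ is \emph{not} a Laplace transform of either shot-noise process. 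The PPP machinery you invoke (the Laplace functional that gave $\eta$ and $\eta_I$) only handles expressions of the form $\mathbb{E}\{e^{-sI}\}$, not $\mathbb{E}\{e^{-\sqrt{aI^2+bI+c}}\}$. So the de-conditioning step as you describe it would not go through, and you would not arrive at~\eqref{eqn:P_outage_MRC_final_RS_General}, which is in fact a closed form (a single spatial integral over $\mathcal{A}$ with an explicit integrand built from Meijer $G$-functions), not the numerical multi-fold integral you anticipate.

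The paper avoids the conic region entirely. It rewrites $SIR_{MRC}$ in the Haimovich form $|\vc_s|^2/\sum_i|\boldsymbol{\nu}_i|^2$ with $|\vc_s|^2=u+v$, $u=g_{SD}$, $v=g_{\mathbb{R}D}$, so that the outage event becomes $u+v<\beta\big[\tfrac{u}{u+v}I_{\Phi^t_D}+\tfrac{v}{u+v}I_{\Psi}\big]$, with the crucial observation (from~\cite{Haimovich:TVT:2000}) that the projected interference terms are independent of $u+v$. One then applies the CDF of the sum of two unequal-mean exponentials to $u+v$, which turns the conditional outage probability into a sum of two exponentials in the weighted interference. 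Expanding each aggregate interference into its per-interferer sum, the expectation factorizes over interferers, and each factor is the MGF of $W_1=\tfrac{u}{u+v}z$ or $W_2=\tfrac{v}{u+v}z$ with $z\sim\mathcal{E}(1)$. A separate lemma computes these MGFs in closed form via Meijer $G$-functions; only then does the PPP generating functional apply, producing the $\psi$ term in~\eqref{eqn:P_outage_MRC_final_RS_General}. That MGF lemma is the missing ingredient in your plan.
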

\begin{proof}
See Appendix~\ref{proof:Outage MRC}.
\end{proof}

\small{
\bigformulatop{16}{ \vskip-0.5cm
\begin{subequations}\label{eqn:P_outage_MRC_final_RS_General}
\begin{align}
{P}_{MRC}^{RS}  \!&= \! 1 \!- \!\int_{\mathcal{A}}\!\!
                                   \frac{1}{1-\left(\!\frac{d_{\mathbb{R}D}}{d}\!\right)^{\alpha}}
                                    \!\!\left[\!
                                             \exp\left(-d_{\mathbb{R}D}^2 \beta^\delta(\eta+\eta_I+\!\psi)\right)
                                             \!-\!\!  \left(\!\frac{d_{\mathbb{R}D}}{d}\!\right)^{\alpha}
                                                  \!\!\exp\left(\!-d^2 \beta^\delta(\eta\!+\!\eta_I\!+\!\psi)\right)
                                    \right]rf_{r,\theta}(r,\theta)drd\theta,\label{eqn:P_outage_MRC_final_RS}\\
\psi & \!= \!
            \frac{\eta}{\Gamma(\delta)}  \!\left(\frac{d}{d_{\mathbb{R}D}}\right)^{\alpha}
            G_{3 3}^{2 3} \left( \left(\frac{d}{d_{\mathbb{R}D}}\!\right)^{\alpha} \
            \!\Bigg\vert \  \!\!\!{-\delta,-1,0 \atop ~~~0,~~0,-1} \right)+
            \frac{\eta_I}{\Gamma(\delta)}  \!\left(\frac{d_{\mathbb{R}D}}{d}\right)^{\alpha}
            G_{3 3}^{2 3} \left( \!\left(\frac{d_{\mathbb{R}D}}{d}\!\right)^{\alpha} \
            \! \Bigg\vert \  \!\!\!{-\delta,-1,0 \atop ~~~0,~~0,-1} \!\right)
                                  \label{eqn:P_outage_MRC_final_RS_Parameter}
\end{align}
\end{subequations}}\normalsize
\vspace{-0.7em}
\subsection{Selection Combining}
Instead of using OC and MRC which require exact knowledge of the CSI, a system may use SC which
simply requires SIR measurements. Indeed, SC is considered as the least complicated
receiver~\cite[Ch. 11]{Simon:Digitalcom:2004}. With SC receiver at the destination, outage occurs if neither
the direct nor the relayed link can support the target SIR. Hence, the outage event is~\cite{Larsson:COML:2005}
\vspace{-.2em}
\setcounter{equation}{17}
\vspace{-0.2em}
\small{
\begin{align}
\mathcal{O}\left(\frac{g_{SD}}{I_{\Phi^t}},\beta\right)\bigcap\left(\mathcal{O}\left(\frac{g_{S\mathbb{R}}}{I_{\Phi^t_{\mathbb{R}}}},\beta\right)
\bigcup\mathcal{O}\left(\frac{g_{\mathbb{R}D}}{I_{\Psi}},\beta\right)\right)\!,\label{eqn:Outage event of SC receiver}
\end{align} }\normalsize

\subsubsection{Best Relay Selection}
The outage probability is given by
\setcounter{equation}{18}
\vspace{-0.2em}
\small{
\begin{align}
P_{out}^{SC,BS}&=\!\gamma\left(1,\lcb \beta ^{\delta} d^{2}\right)
\left(\gamma\left(1,\frac{\phi(1-p)\lambda}{\beta^\delta\lcb}\gamma\left(1,\lcb \beta^\delta d_s^2\right)\right)\right.\nonumber\\
&\qquad\qquad\qquad\quad\left.\times\left(1-P_{SC}^{BS}\right)\!+\!P_{SC}^{BS}\right),\label{eqn:Outage probability of SC RS}
\end{align} }\normalsize
\begin{figure}[h]
\centering
\includegraphics[width=85mm, height=60mm]{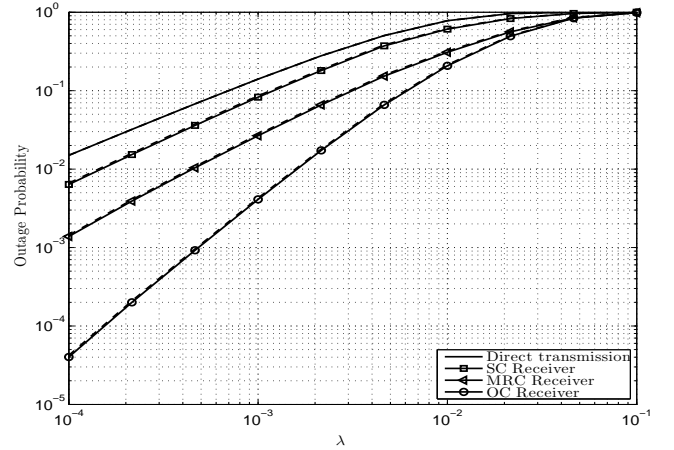}
\vspace{-0.8 em}
\caption{Outage probability versus $\lambda$ for different combining receivers with random relay selection. Simulation results are shown with dashed lines.}
\label{fig:outage_OC_MRC_OC_RandomSelection}
\vspace{-1.0 em}
\end{figure}
where
\small{
\begin{align}
&{P}_{SC}^{BS}\!= 1\!-\mathbb{E}\left\{
                             {\textbf{1}}(x_k\rightarrow x_D \vert \Psi\backslash\{x_\mathbb{R}\})\right\}.
                             \label{eqn:outage_probability SC}
\end{align} }\normalsize
Note that we obtained a lower bound for \small{${P}_{SC}^{BS}$ }\normalsize
in~\eqref{eqn:CDF_SIR_RD} which
yields a lower bound for the outage probability of the SC receiver at the destination with best
relay selection.
\subsubsection{Random Relay Selection}
The outage probability of random relay selection follows from~\eqref{eqn:Outage probability of SC RS} by replacing \small{$P_{SC}^{BS}$ }\normalsize with \small{$P_{SC}^{RS}$ }\normalsize, given in~\eqref{eqn:success_probability_RS}.
\section{Numerical and Simulation Results}\label{sec:Numerical}
In this section, we study the accuracy of the derived analytical results and compare the outage probability of OC, MRC and SC schemes with best and random relay selection. In all simulations, we have set \small{$\alpha=4$, $\beta=3$ }\normalsize dB, \small{$d=10$ }\normalsize m, \small{$d_s=7$  }\normalsize m
and \small{$\phi=\frac{\pi}{3}$, }\normalsize unless stated otherwise. To ensure a fair comparison,
the SIR threshold of cooperative transmission is set to be twice as much as direct transmission
threshold. This is because, in the broadcasting phase, the source uses half of the channel uses and in the relaying phase, the relay uses the remaining channel uses.

\begin{figure}[h]
\centering
\includegraphics[width=85mm, height=60mm]{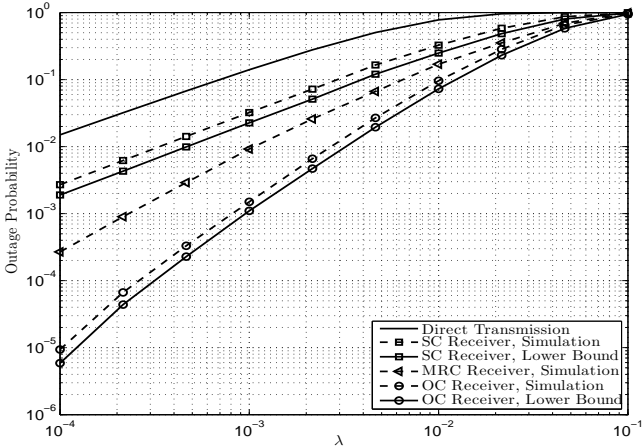}
\vspace{-0.8 em}
\caption{Outage probability versus $\lambda$ for different combining receivers with best relay selection .}
\label{fig:outage_OC_MRC_OC_BestSelection}
\vspace{-0.9 em}
\end{figure}
The performance of OC and sub-optimal combining schemes (MRC and SC) can be
further ascertained by referring to Fig.~\ref{fig:outage_OC_MRC_OC_RandomSelection}, where the
probability of outage as a function of node density, \small{$\lambda$ }\normalsize is shown for random relay selection. Analytical expressions in~\eqref{eqn:P_outage_OC_final_RS},~\eqref{eqn:P_outage_MRC_final_RS_General}, and
~\eqref{eqn:outage_probability SC} are confirmed as they are seen to follow the simulations tightly. The performance improvement of all three combining schemes compared to the direct transmission is noticeable for all intensities. Note that although increasing the intensity of nodes results in increasing the number of qualified relay nodes, the interference caused by these relay nodes in relaying phase is increased. Hence, for higher values of intensity the performance of all schemes are converging to the same values.

The effect of best relay selection on the outage performance of combining schemes
is investigated in Fig.~\ref{fig:outage_OC_MRC_OC_BestSelection} and the tightness of proposed lower bound for OC and SC receiver are validated. Comparing Fig.~\ref{fig:outage_OC_MRC_OC_RandomSelection} and Fig~\ref{fig:outage_OC_MRC_OC_BestSelection} reveals that best relay selection significantly outperforms the random relay selection as expected.

Our observation of the relation between the outage performance of combining schemes and selection region parameters which due to the space limitation are not shown in simulations, reveals that:
\emph{1)} There is an optimal value of the parameter \small{$\phi$ }\normalsize for each combining scheme that achieves the maximum success probability.
\emph{2)} Increasing the parameter \small{$\phi$ }\normalsize beyond its optimum value does not degrade the success probability. The reason is that although enlarging the selection region increases the possibility of being relay nodes in this area and consequently increases the intensity of interferer set for the second hop, the relay selection strategy only selects a single relay, and thus the intensity of the interferers for the second hop is upper bounded. Therefore, the success probability remains constant.

\section{Conclusion}
The performance of relay selection schemes along with different combining schemes for cooperative
transmissions in ad hoc networks have been studied. In particular, we have obtained closed-form
expressions for the outage probability of OC, MRC, and SC receivers at the destination with random relay selection. We have also derived two useful tight lower bounds for the OC and SC receivers with best relay selection. The accuracy of the analytical
results has been validated using Monte Carlo simulations.
\appendices
\vspace{-0.5em}
\section{Proof of Proposition~\ref{Proposition:Outage OC_BS}}
\label{proof:Proposition:OC:BS}
The proof of Proposition~\ref{Proposition:Outage OC_RS} is accomplished in four steps as
follows:\\
\textbf{Step 1:}
The outage probability corresponding to the event \small{$\mathcal{O}\left(g_{SD}/I_{\Phi^t_D},\beta\right)$ }\normalsize is given
by~\eqref{eqn:successprobability_direct hop}.
\\
\textbf{Step 2:} The outage event \small{$\mathcal{O}\left(g_{S\mathbb{R}}/I_{\Phi^t_{\mathbb{R}}},\beta\right)$ }\normalsize is equivalent to the event that
the potential relay set \small{$\Phro$ }\normalsize  is empty. Note that \small{$\Phro$ }\normalsize which is derived by thinning the PPP \small{$\Phi^{r}$, }\normalsize is still a PPP and
Marking theorem of Poisson processes~\cite{StochasticGeometry_Book_1996} gives its density as
\small{
\begin{align}\nonumber
   \Lambda_o(x)  & = (1-p)\lambda \mathbb{E}\left\{ \textbf{1}(x_S\rightarrow x_k | \Phi^t\backslash\{x_S\})\right\}
   \nonumber\\
                 & =(1-p)\lambda\exp\left(-\lcb\beta^\delta \ r^2 \right).\label{eqn:intensity Potential_relay set}
\end{align} }\normalsize
Therefore, the outage probability corresponding to desired outage event is given by
\small{
\begin{align}
    P_r   = \exp(-\mu_o(\mathcal{A})),
               \label{eqn:outage probability SR event}
\end{align} }\normalsize
where \small{$\mu_o(\mathcal{A})$ }\normalsize denotes the mean measure of \small{$\Phro$ }\normalsize and
\small{
\begin{align}
    \mu_o(\mathcal{A})   = \int_{\mathcal{\mathcal{A}}}\Lambda_o(dx),
             \label{eqn:mean measure of potential relay set}
\end{align} }\normalsize
in which \small{$\mathcal{A}$ }\normalsize is the selection region and $dx$ denotes a two-dimensional variable of integration over the polar area.
Finally, by substituting~\eqref{eqn:intensity Potential_relay set} into~\eqref{eqn:mean measure of potential relay set}
the outage probability in~\eqref{eqn:outage probability SR event} is obtained for sectorized selection region as
\small{
\begin{align}
        P_r   =   \exp \left(-\frac{\phi(1-p)\lambda}{\beta^\delta\lcb}\gamma\left(1,\lcb \beta^\delta d_s^2\right)\right).
        \label{eqn:outage event empty potential relay set}
\end{align} }\normalsize
\\
\textbf{Step 3:}
The outage probability corresponding to the outage event \small{$\mathcal{O}\left(SIR_{OC},\beta\right)$ }\normalsize can be written as\footnote{In general, correlation of node locations in wireless ad hoc network, makes the interference temporally and spatially correlated~\cite{Heng:Gong12tmc}.
However, our derivation in~\eqref{eqn:Pout_Optimal_combining} does not include the impact of correlation. The justification of this assumption is primarily to preserve analytical tractability and simplicity, however, simulations validate this assumption.}
\vspace{-0.2em}
\small{
\begin{align}
 {P}_{OC}^{BS}\! &=\Prob\left(SIR_{SD}+SIR_{\mathbb{R}D}<\beta\right)\nonumber\\
 &=\! \int_0^{\beta}\!F_{\mathbb{R}D}(\beta\!-\!y)f_{{SD}}(y)dy,
    \label{eqn:Pout_Optimal_combining}
\end{align} }\normalsize
where \small{$F(\cdot)$ }\normalsize and \small{$f(\cdot)$ }\normalsize denotes cdf and probability density function (pdf) of the RV,
respectively. \small{$f_{SD}(y)$ }\normalsize can be found simply by taking the first order derivation of~\eqref{eqn:successprobability_direct hop}, which yields
\vspace{-0.2em}
\small{
\begin{align}
    f_{SD}(y) = \eta d^{2}\delta y^{\delta-1}\exp(-\eta d^{2}y^{\delta}).\label{eqn:PDF_SIR_Sd}
\end{align} }\normalsize
A lower bound for \small{$F_{\mathbb{R}D}(\beta)$ }\normalsize is obtained as
\small{
\begin{align}
&F_{\mathbb{R}D}(\beta) =1-\mathbb{E}\left\{ {\textbf{1}}(x_k\rightarrow x_D \vert \Psi\backslash\{x_\mathbb{R}\})\!\right\}\nonumber\\
&~
          \!=\!1\!-\!\Prob\left( \max_{x_k \in \Phro} \left\{ h_{kD}\ell(x_k-x_D)\right\}>\beta I_\Psi\right)\nonumber\\
&~
          \!=\!1\!-\!\mathbb{E}\left\{ \prod_{x_k \in \Phro}\exp\left(-\frac{\beta I_\Psi}{\ell(x_k-x_D)}\right)\right\}\nonumber\\
&~
          \!\stackrel{(a)}{=}\!
          1\!-\!\mathbb{E}\left\{
                               \exp\! \left(\!-\!
                               \int_{\mathcal{A}}\!
                               {\!\left[1\!-\!\exp\!\left(\!-\!\frac{\beta I_\Psi}{\ell(x_k\!-\!x_D)}\right)\!\right]
          \Lambda_o(dx)}\!\right)\!\right\}\nonumber\\
&~
          \!\stackrel{(b)}{\geq}\!
               1\!-\!\exp
               \left(-\int_{\mathcal{A}}\!
                                      {\!\left[1\!-\!\mathbb{E}\left\{\exp\left(-\frac{\beta I_\Psi}{\ell(x_k\!-\!\ro)}\right)\right\}\right]
                                      \Lambda_o(dx)}\right)
                \nonumber\\
&~
           \!\stackrel{(c)}{=}\!
               1\!-\!\exp \left(-\int_{\mathcal{A}}\!
                                   {\left[1-\exp(-\eta_I \beta^\delta d_{\mathbb{R}D}^2)\right]\Lambda_o(dx)}\right),
\label{eqn:CDF_SIR_RD}
\end{align} }\normalsize
where (a) follows from the generating functional of the PPP, \small{$\Phro$ }\normalsize with intensity
\small{$\Lambda_o(x)$}\normalsize\footnote{Let $\nu(x):\Rd\rightarrow[0,1]$ and $\int_{\Rd}{\vert1-\nu(x)\vert dx}<\infty$.
When $\Phi$ is Poisson of intensity $\lambda$, the conditional generating functional is $\mathbb{E}\{\prod_{x\in\Phi}\nu(x)\}=\exp\left(-\lambda\int_{\Rd}[1-\nu(x)]dx\right)$.}
and (b) follows by using the the Jensen's inequality. (c) holds by taking the expectation over \small{$I_\Psi$ }\normalsize where the exponential distribution of channel gains and the generating functional of \small{$\Psi$ }\normalsize have been used.
Moreover, \small{$\eta_I = c\Lambda_I$ }\normalsize where \small{$\Lambda_I$ }\normalsize is the intensity of interferers for second hop transmission.
The details of intensity evaluation for the second hops' interferers are deferred to Appendix~\ref{Intensity of Interfere Sets for Second Hop Transmission}.
Plugging~\eqref{eqn:CDF_SIR_RD} together with~\eqref{eqn:PDF_SIR_Sd} into~\eqref{eqn:Pout_Optimal_combining}, yields the lower bound on \small{$P_{OC}^{BS}$ }\normalsize in~\eqref{eqn:Pout_Optimal_combining}.\\
\\
\textbf{Step 4:} Referring to the outage event in~\eqref{eqn:Outage event of OC receiver},
the overall outage probability of the cooperative transmission with OC receiver and best relay selection
is given by
 \vspace{-0.2em}
\small{
\begin{align}
  P_{out}^{OC,BS}=P_d\left(\left(1-P_r\right)\left(1-P_{OC}^{BS}\right)+P_{OC}^{BS}\right)
               \label{eqn:outage_OC_BS_Raw}
\end{align} }\normalsize
Plugging~\eqref{eqn:successprobability_direct hop},~\eqref{eqn:outage event empty potential relay set} and~\eqref{eqn:Pout_Optimal_combining}
into~\eqref{eqn:outage_OC_BS_Raw} and after some manipulations gives the desired result in~\eqref{eqn:Outage probability of OC receiver_RS}.
\vspace{-0.8em}
\section{Intensity of the Interferer set}
\label{Intensity of Interfere Sets for Second Hop Transmission}
Deriving the intensity of interferer set for the second-hop transmission is equivalent to exploit how many selected relays are there for transmission in the second hop.
Provided that the potential relay set is not empty, since each source only selects a single relay in both best relay selection and random relay selection, it is clear that the intensity of interferer set is at most $p\lambda$, which is the intensity of the source nodes.
Let us denote the probability that an arbitrary source node successfully selects one relay node by $q$.
Then, the intensity of active relay nodes in second stage of cooperative transmission protocol is $pq\lambda$, which is also the intensity of the interferers.
Note that $q$ is proportional to the probability that at least there is one relay in potential relay set which for a typical pair is mathematically
described by
\vspace{-0.2em}
\small{
\begin{align}
    q   = 1-P_r = 1-\exp(-\mu_o(\mathcal{A})).
               \label{eqn:one relay in potential relay set}
\end{align} }\normalsize
Therefore, in the case of a sectorized selection region, the intensity of interferer set in second stage of transmission, $\Lambda_I$, is given by
 \vspace{-0.2em}
\small{
\begin{align}
\Lambda_I &   = p \lambda\left(1- \exp\left(-(1-p)\lambda\int_{\mathcal{A}}{\exp\left(-\lcb\beta^\delta \ r^2\right)dx }\right)\right)
\nonumber\\
&=p \lambda\gamma\left(1,\frac{\phi(1-p)\lambda}{\lcb \beta^\delta}\gamma\left(1,\lcb \beta^\delta d_s^2\right)\right).
    \label{eqn:Intensity of interferer set for the second-hop transmission}
\end{align} }\normalsize
\scriptsize{
\bigformulatop{35}{ \vskip-0.5cm
\begin{align}
{P}_{MRC}^{RS} &=
               1 -\mathbb{E}\left\{
                                    \frac{1}{\mu_1-\mu_2}
                                     \left[\mu_1
                                      \exp\left(-\mu_2\beta\left(
                                                                 \frac{u}{u+v}I_{\Phi^t_D}+\frac{v}{u+v}I_{\Psi}\right)\right) - \right.\right.\nonumber\\
                                     & \qquad\qquad\qquad\qquad\qquad
                                     \left.\left.
                                      \mu_2
                                      \exp\left(-\mu_1\beta\left(
                                                                 \frac{u}{u+v}I_{\Phi^t_D}+\frac{v}{u+v}I_{\Psi}\right)
                                                                 \right)\right]\right\}\nonumber\\
                                  &\stackrel{(a)}{=}
                                  \! 1 \!- \!\mathbb{E}\left\{
                                   \frac{1}{\mu_1\!-\!\mu_2}
                                  \!\left[ \mu_1
                                   \!\!\prod_{x_k\in\Phi^t\backslash\{x_S\}}\!\!\!\!
                                                        \mathcal{M}_{W_1}\!\left(\mu_2\beta\ell(x_k-x_D)\right )
                                    \! \!\prod_{y_k\in \Psi\backslash\{\mathbb{R}\}}\!\!\!\!
                                                        \mathcal{M}_{W_2}\!\!\left(\mu_2\beta\ell(y_k-x_D)\right)
                                                        \!- \nonumber\right.\right.\\
                                                        &\qquad\qquad\qquad\qquad\left.\left.\mu_2
                                    \! \!\prod_{x_k\in\Phi^t\backslash\{x_S\}}\!\!\!\!
                                                        \mathcal{M}_{W_1}\!\!  \left(\mu_1\beta\ell(x_k-x_D)\right)
                                    \! \!\prod_{y_k\in \Psi\backslash\{\mathbb{R}\}}\!\!\!
                                                       \mathcal{M}_{W_2}\! \! \left(\mu_1\beta\ell(y_k-x_D)\right)\! \right]\! \right\}
                                    \label{eqn:P_outage_MRC_proof}
\end{align} }\normalsize
\vspace{-1.2em}
\section{Proof of Proposition~\ref{Proposition:Outage MRC}}
\label{proof:Outage MRC}
Before deriving the outage probability we will need the following lemma.
\begin{Lemma}\label{lem: MGF of the W}
Given three RVs \small{$u\sim \mathcal{E}(\mu_1)$, $v\sim \mathcal{E}(\mu_2)$, }\normalsize and
\small{$z\sim \mathcal{E}(1)$, }\normalsize the moment generating function (MGF) of \small{$W_1 =
uz/(u+v)$, $\mathcal{M}_{W_1}(s)=\mathbb{E}\{\exp(-sW_1)\}$ }\normalsize is given by
 \vspace{-0.2em}
\small{
\begin{align}
\mathcal{M}_{W_1}(s) & \!=\! 1-\!\frac{s}{s+1}\times\nonumber\\
&\quad\quad\left[\!1\!-\!\frac{\mu_1}{\mu_2}\frac{1}{s\!+1}G_{2 2}^{2 2} \left( \frac{\mu_1}{\mu_2}\frac{1}{s+\!1} \  \Bigg\vert \  {\!\!\!-1,0 \atop 0,0} \right)\right]
,    \label{eqn:MGF_w1_MeijerG}
\end{align} }\normalsize
where \small{$G_{p q}^{m n} \left( s \  \vert \  {a_1\cdots a_p \atop b_1\cdots b_q} \right)$ }\normalsize denotes the Meijer's G-function defined in~\cite[ Eq. (9.301)]{Integral:Series:Ryzhik:1992}. With appropriate changes of the mean indices in~\eqref{eqn:MGF_w1_MeijerG}, the MGF of \small{$W_2 = vz/(u+v)$, }\normalsize can be expressed as
 \vspace{-0.2em}
\small{
\begin{align}
\mathcal{M}_{W_2}(s) & \!=\! 1-\!\frac{s}{s+1}\times\nonumber\\
&\quad\quad\left[\!1\!-\!\frac{\mu_2}{\mu_1}\frac{1}{s\!+1}G_{2 2}^{2 2} \left( \frac{\mu_2}{\mu_1}\frac{1}{s+\!1} \  \Bigg\vert \  {\!\!\!-1,0 \atop 0,0} \right)\right]
,    \label{eqn:MGF_w2_MeijerG}
\end{align} }\normalsize
\end{Lemma}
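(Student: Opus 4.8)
The plan is to evaluate $\mathcal{M}_{W_1}(s)=\mathbb{E}\{\exp(-sW_1)\}$ by integrating out the three independent exponentials one at a time, reducing the problem to a single scalar integral that can be matched to a Meijer's G-function. First I would condition on $u$ and $v$ and average over $z\sim\mathcal{E}(1)$. Since $W_1=uz/(u+v)$ is linear in $z$ given $(u,v)$, the inner expectation is just the Laplace transform of a unit-mean exponential evaluated at $su/(u+v)$, which gives
\begin{align}
\mathcal{M}_{W_1}(s)=\mathbb{E}_{u,v}\left\{\frac{u+v}{(1+s)u+v}\right\}=1-s\,\mathbb{E}_{u,v}\left\{\frac{u}{(1+s)u+v}\right\},\nonumber
\end{align}
where the second equality isolates the nontrivial part through $\frac{u+v}{(1+s)u+v}=1-\frac{su}{(1+s)u+v}$.

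Next I would evaluate the remaining expectation over $u\sim\mathcal{E}(\mu_1)$ and $v\sim\mathcal{E}(\mu_2)$. The cleanest route is the Schwinger/Laplace identity $\frac{1}{(1+s)u+v}=\int_0^\infty e^{-t[(1+s)u+v]}\,dt$, which decouples $u$ and $v$. The $v$-average produces a factor $1/(1+\mu_2 t)$, the $u$-average (carrying the extra factor $u$) produces $\mu_1/(1+(1+s)\mu_1 t)^2$, and after rescaling $t$ the two collapse into the single integral
\begin{align}
\mathbb{E}_{u,v}\left\{\frac{u}{(1+s)u+v}\right\}=\frac{\mu_1}{\mu_2}\int_0^\infty\frac{dw}{(1+w)\left(1+\kappa w\right)^{2}},\qquad \kappa=(1+s)\tfrac{\mu_1}{\mu_2}.\nonumber
\end{align}
The key step is then to recognize this rational integral as a Meijer $G$-function: writing $(1+w)^{-1}$ and $(1+\kappa w)^{-2}$ through their Mellin transforms $\Gamma(\xi)\Gamma(1-\xi)$ and $\kappa^{-\eta}\Gamma(\eta)\Gamma(2-\eta)$ and invoking the Mellin convolution (Parseval) theorem turns the integral into a Mellin--Barnes contour integral that is, by definition, the $G^{2\,2}_{2\,2}$ with the parameter list $\{-1,0\,;\,0,0\}$. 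Collecting the prefactors then reproduces the Meijer's G-function structure quoted in~\eqref{eqn:MGF_w1_MeijerG}.

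I expect the Mellin-to-Meijer identification to be the main obstacle. One must verify the common strip of analyticity (essentially $0<\mathrm{Re}\,\xi<1$) in which both Mellin transforms and the Parseval formula hold, shift the contour so the poles line up with the standard $G$-function parameters, and check convergence so that no residue or branch contributions are dropped; this is where the bookkeeping is delicate. Once~\eqref{eqn:MGF_w1_MeijerG} is established, the MGF of $W_2=vz/(u+v)$ follows \emph{immediately} by the symmetry $u\leftrightarrow v$, i.e.\ $\mu_1\leftrightarrow\mu_2$, which simply interchanges the roles of $\mu_1/\mu_2$ and $\mu_2/\mu_1$ in the prefactor and argument and yields~\eqref{eqn:MGF_w2_MeijerG}.
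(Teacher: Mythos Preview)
Your argument is correct, but it follows a genuinely different route from the paper. The paper first derives the CDF of $W_1$ by writing $W_1=z/(1+\gamma)$ in terms of the ratio $\gamma$ of the two exponentials, integrates against the density of that ratio, and obtains $F_{W_1}(w)$ in closed form via the Tricomi confluent hypergeometric function $\Psi(1,1,\cdot)$; the MGF is then recovered from the identity $\mathcal{M}_{W_1}(s)=s\,\mathcal{L}\{F_{W_1}\}(s)-F_{W_1}(0)$ together with a tabulated Laplace transform (Prudnikov) that maps $e^{-w}\Psi(1,1,aw)$ directly to the $G^{2\,2}_{2\,2}$. Your approach bypasses the CDF entirely: averaging over $z$ first yields the elementary rational expression $(u+v)/((1+s)u+v)$, the Schwinger parametrization decouples $u$ and $v$, and a single rescaling collapses the problem to $\int_0^\infty (1+w)^{-1}(1+\kappa w)^{-2}\,dw$, which you then identify with the Meijer $G$ through the Mellin convolution theorem. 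Your route is shorter and avoids introducing the Tricomi function as an intermediate object; the paper's route has the advantage of relying purely on standard table lookups (Abramowitz--Stegun and Prudnikov) rather than assembling the Mellin--Barnes representation by hand, and it yields the CDF of $W_1$ as a useful by-product. In both cases the result for $W_2$ follows from the $\mu_1\leftrightarrow\mu_2$ symmetry, exactly as you note.
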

\begin{proof}
See Appendix~\ref{proof:MGF of Wi}
\end{proof}
Following similar steps as in the OC case with random relay selection, we get the outage probability as~\eqref{eqn:Outage probability of MRC receiver_RS}.

What remains to calculate is then to determine the outage probability \small{${P}_{MRC}^{RS}$}\normalsize. Let us define
\small{
\begin{align}
u \triangleq g_{SD}, \quad
v \triangleq g_{\mathbb{R}D}.
    \label{eqn:Auxiliary_Variable}
\end{align} }\normalsize
The RVs, \small{$u$ }\normalsize and \small{$v$ }\normalsize are exponentially distributed with parameter \small{$\mu_1=d^{\alpha}$ }\normalsize and \small{$\mu_2=\drd^{\alpha}$,} \normalsize respectively.
We recall that for random relay selection case, the RV \small{$g_{\mathbb{R}D}$ }\normalsize is considered as an arbitrary exponential RV. It can be shown that the SIR in~\eqref{eqn:SIR_Maximal ratio combining} is of the form of~\cite[Eq. (28)]{Haimovich:TVT:2000}, i.e.,
\small{
\begin{align}
SIR_{MRC} &= \frac{\vert \vc_s\vert^2}{\sum_{i=1}^{L} \vert \boldsymbol {\nu}_i\vert^2},
                                    \label{eqn:SIR_MRC_hamovich}
\end{align} }\normalsize
where \small{$\vert \vc_s\vert^2 = u+v$,~$\boldsymbol {\nu}_i =\vc_s^\dag\vc_i/\vert \vc_s\vert$}\normalsize, and \small{$L=\vert\Phi^t\vert+\vert\Psi\vert$ }\normalsize with \small{$ \vc_i$ }\normalsize being the channel coefficient between the interferer \small{$i$ }\normalsize and destination and \small{$\vert\cdot\vert$ }\normalsize being the cardinality of a set. Therefore, it can be shown that \small{$\boldsymbol{\nu}_i$}\normalsize's are independent of \small{$\vc_s$}\normalsize~\cite{Haimovich:TVT:2000} and thus \small{${P}_{MRC}^{RS}$ }\normalsize can be written as
\small{
\begin{align}
{P}_{MRC}^{RS} &= \Prob\left(
                      (u+v)<\beta\left[\frac{u}{u+v}I_{\Phi^t_D} + \frac{v}{u+v}I_{\Psi} \right]\right).
                                    \label{eqn:P_outage_MRC_hamovich}
\end{align} }\normalsize
Using the cdf of \small{$u+v$ }\normalsize given in~\cite[Eq. (40)]{Laneman:IT:Dec:2004}, the outage probability can be expressed as~\eqref{eqn:P_outage_MRC_proof} at the top of this page, where ($a$) follows by taking the expectation with respect to \small{$W_1$ }\normalsize and \small{$W_2$. }\normalsize Using Lemma~\ref{lem: MGF of the W}, the generating functional of Poisson processes, $\Phi^t$ and $\Psi$, and~\cite[ Eq. (3.194.4) and Eq. (7.811.2)]{Integral:Series:Ryzhik:1992} gives, after some manipulation, the desired result in~\eqref{eqn:P_outage_MRC_final_RS_General}.
\vspace{-0.5em}
\section{Proof of Lemma~\ref{lem: MGF of the W}}
\label{proof:MGF of Wi}
In order to obtain the MGF of \small{$W_1$}\normalsize, we first derive the cdf of \small{$W_1$ }\normalsize as
\vspace{-0.25em}
\small{
\setcounter{equation}{36}
\begin{align}
F_{W_1}(w)  &\!=\! \int_{0}^{\infty} \Prob\left(z<w\left(1+\gamma\right)\right)f_\Upsilon(\gamma)d\gamma,\label{eqn:CDF_w1}
\end{align} }\normalsize
where \small{$\gamma\triangleq\frac{u}{v}$, }\normalsize and its pdf can be readily obtained as
\vspace{-0.25em}
\small{
\begin{align}
f_{\Upsilon}(\gamma) = \frac{\frac{\mu_1}{\mu_2}}{(\gamma+\frac{\mu_1}{\mu_2})^2}.    \label{eqn:PDF_V}
\end{align} }\normalsize
Plugging~\eqref{eqn:PDF_V} into~\eqref{eqn:CDF_w1}, using~\cite[ Eq. (5.1.4) and Eq. (13.6.30)]{Abramowitz_Handbook_1970}, and after some manipulations, we get the cdf in closed-form as
\vspace{-0.25em}
\small{
\begin{align}
F_{W_1}(w)  &\!=1\!-\!\exp(-w)\left[1-\frac{\mu_1}{\mu_2}w\Psi\left(1,1,\frac{\mu_1}{\mu_2}w\right)\right],
              \label{eqn:CDF_w1_Tricomi}
\end{align} }\normalsize
where \small{$\Psi(\cdot,\cdot,\cdot)$ }\normalsize denotes the Tricomi confluent hypergeometric
function~\cite[ Eq. (9.211.4)]{Integral:Series:Ryzhik:1992}. Now, the MGF of \small{$W_1$}\normalsize~can be directly found from
\vspace{-0.25em}
\small{
\begin{align}\nonumber
\mathcal{M}_{W_1}(s) & = s \mathcal{L}(F_{W_1}(w))-F_{W_1}(0),
\end{align} }\normalsize
where \small{$\mathcal{L}(\cdot)$ }\normalsize denotes the Laplace transform and \small{$F_{W_1}(0)=0$. }\normalsize
Using~\cite[ Eq. (3.36.1.7)]{Laplace:Prudnikov:1992} one can obtain the MGF of \small{$W_1$}\normalsize~in closed-form as~\eqref{eqn:MGF_w1_MeijerG} and the lemma is proved.
\vspace{-0.25em}
\section*{Acknowledgment}
This work was supported in part by the Australian Research Council's Discovery Projects funding scheme (project no. DP110102548).
\bibliographystyle{IEEEtran}
\bibliography{IEEEabrv,refrence_AdHoc_AHS}

\end{document}